%% file: main.tex
\documentclass{article}

\usepackage{microtype}
\usepackage{graphicx}
\usepackage{booktabs}
\usepackage{hyperref}
\usepackage[accepted]{icml2026}
\usepackage{xcolor}
\usepackage{tikz}
\usetikzlibrary{arrows.meta,fit,backgrounds,positioning}
\usepackage[above]{placeins}
\usepackage{amsmath}
\usepackage{amssymb}
\usepackage{mathtools}
\usepackage{amsthm}
\usepackage{enumitem}
\usepackage[capitalize,noabbrev]{cleveref}
\usepackage{xspace}

\input{math_commands}

\theoremstyle{plain}
\newtheorem{theorem}{Theorem}[section]
\newtheorem{proposition}[theorem]{Proposition}

\icmltitlerunning{ChainCaps}

\begin{document}

\twocolumn[
\icmltitle{ChainCaps: Composition-Safe Tool-Using Agents via Monotonic Capability Attenuation}

\begin{icmlauthorlist}
\icmlauthor{Xiaochong Jiang}{indep1}
\icmlauthor{Shiqi Yang}{indep2}
\icmlauthor{Ziwei Li}{King Abdullah University of Science and Technology}
\icmlauthor{Lifei Liu}{indep1}
\icmlauthor{Haoran Yu}{indep1}
\icmlauthor{Yichen Liu}{indep1}
\end{icmlauthorlist}

\icmlaffiliation{indep1}{Independent Researcher, Seattle, WA, USA}
\icmlaffiliation{indep2}{Independent Researcher, New York City, NY, USA}
\icmlaffiliation{King Abdullah University of Science and Technology}{King Abdullah University of Science and Technology, Saudi Arabia}

\icmlcorrespondingauthor{Xiaochong Jiang}{jiang.xiaoc@northeastern.edu}
\icmlcorrespondingauthor{Shiqi Yang}{sy3506@nyu.edu}
\icmlcorrespondingauthor{Ziwei Li}{ziwei.li@kaust.edu.sa}
\icmlcorrespondingauthor{Lifei Liu}{lliu.lifei@gmail.com}
\icmlcorrespondingauthor{Haoran Yu}{haoranyu889@gmail.com}
\icmlcorrespondingauthor{Yichen Liu}{yil160@ucsd.edu}

\icmlkeywords{LLM agents, agent safety, tool security, information flow control, composition safety, MCP}

\vskip 0.3in
]

\printAffiliationsAndNotice{}

\input{sections/0_abstract}
\input{sections/1_introduction}
\input{sections/2_related_work}
\input{sections/3_method}

\input{sections/4_evaluation}
\input{sections/5_conclusion}

\bibliography{references}
\bibliographystyle{icml2026}

\end{document}

%% file: math_commands.tex
\newcommand{\chaincaps}{\mbox{ChainCaps}\xspace}

%% file: sections/0_abstract.tex
\begin{abstract}
Tool-using agents increasingly operate in open-ended deployment environments, where they compose file systems, web APIs, code interpreters, and enterprise services at runtime. This creates a safety gap in tool composition: an agent can satisfy every per-tool permission check and still produce an unsafe end-to-end effect, such as reading a confidential document, summarizing it, and sending the summary to an external endpoint. We call this failure mode \emph{permission laundering}. \chaincaps addresses it with a runtime rule: every value carries a sink-specific capability budget, and tool composition propagates budgets by intersection. A value can preserve or lose authority as it moves through a tool chain, but it cannot gain new authority through composition. We implement \chaincaps as a transparent MCP proxy that requires no changes to the agent or tool servers. On 82 tasks across five frontier models from three providers, \chaincaps reduces attack success rate from 25--68\% to 0--4.8\% while preserving 96--100\% benign completion. In replay experiments, it also outperforms scalar-IFC and per-function-isolation baselines. Manifest quality is the dominant deployment bottleneck: expert manifests reach 100\% attack blocking, while naive manifests fall to 27.3\%. Our claims are limited to explicit-flow composition safety under trusted manifests and proxy-visible data movement, a practical gap in deployed tool-using agents today.
\end{abstract}

%% file: sections/1_introduction.tex
\section{Introduction}
\label{sec:introduction}

Consider a deployed agent asked to prepare an executive summary. It reads a salary spreadsheet (\texttt{read\_file}), fetches industry benchmarks from the web (\texttt{fetch\_url}), summarizes the combined data (\texttt{summarize}), and emails the result to a distribution list (\texttt{send\_email}). Each call is individually authorized. The composition is not: salary data that should only be used internally has now been included in an outbound message. No single tool misbehaved; the end-to-end workflow violated policy.

Tool-using language-model agents increasingly run real workflows in exactly this pattern. In deployed settings, they read local files, query internal services, call web APIs, and write results back through the Model Context Protocol (MCP) or similar interfaces \cite{mcp2024,seagent2026,pfi2025}. Adjacent work on human-preference prediction further shows that deployed LLM pipelines can face performance--efficiency trade-offs when choosing between fine-tuned LLMs and lighter traditional models \cite{zhang2026performance}. Roadmaps for AI-assisted research likewise frame future systems around multi-step workflows that coordinate retrieval, analysis, and tool use \cite{kong2026aiautoresearch}. High-stakes medical vision-language systems show a similar move toward iterative diagnostic state, including dynamic visual focus for progressive diagnosis and latent memory for clinical reasoning \cite{zhu2026medeyes,zhu2026medsynapsevbridgingvisualperception}. The runtime call graph is assembled by the model at inference time. Similar routing pressure appears in retrieval-heavy applications: recent financial RAG systems route queries to whole documents before scoped chunk retrieval to reduce cross-document confusion \cite{cheng2026hdrr}. This makes agents useful for flexible workflows, but it also exposes a safety gap that per-tool authorization does not address: the composition of locally acceptable calls can violate a deployment policy.

We call this failure mode \emph{permission laundering}. The agent does not need to violate any single tool's local permission rule. Instead, authority is effectively recovered through a sequence of allowed calls: a restricted value is transformed, mixed with other values, or rewritten before reaching a sink it should not reach. For deployed agents, this is a safety problem as much as a security problem, because it can arise in ordinary workflows rather than only under overtly malicious use.

This gap is not hypothetical. ChainFuzzer reports 365 multi-tool vulnerabilities across 19 of 20 real-world agent applications \cite{chainfuzzer2026}. Models dynamically route intermediate values across tools in ways that static planners, server-side access checks, and per-function wrappers do not capture. The exact composition depends on prompt wording, retrieved context, partial outputs, and tool return values, so the unsafe flow is often not known until the workflow actually runs. A deployment mechanism must reason about authority propagation over the actual dataflow that occurs, not just about the access rights of individual tools in isolation.

Existing defenses only partially cover this setting. Label-based information-flow defenses such as Fides track confidentiality and integrity at the value level, but scalar labels do not directly encode which downstream sinks a value may still reach \cite{fides2025}. Isolation-based approaches such as PFI and server-level permission enforcement reduce blast radius for individual calls or servers, but they do not by themselves prevent safe-looking calls from composing into unsafe flows \cite{pfi2025,agentbound2025}. Prompt-injection mitigations and safety monitors can help detect attacks or trigger refusals, but they usually do not enforce a runtime invariant over authority propagation through a chain of tool calls \cite{shieldagent2025,mite2026}.

We present \chaincaps, a runtime mechanism for composition-safe tool use. The key idea is to attach a sink-specific authority budget to every value and propagate that budget monotonically through the workflow. If an intermediate value depends on multiple inputs, its future authority becomes the intersection of the inputs' remaining authority and the tool's declared pass-through budget. Composition can therefore preserve or reduce a value's downstream authority, but it cannot widen what the value may do next. We implement this as an MCP proxy, so it can sit between frontier-model agents and existing tool servers without changing either side.

Our evaluation targets the deployment setting directly. Across an 82-task stress-test suite, five frontier models, and three runs per condition, \chaincaps reduces attack success rate from 25--68\% without defense to 0--4.8\% while preserving 96--100\% benign completion. In replay-based comparison, it blocks substantially more attack traces than modeled scalar-IFC and per-function-isolation baselines under the same traces and metadata. The main deployment lesson is that manifest quality matters: with expert manifests, blocking reaches 100\% on the manifest-quality sweep, while naive manifests drop to 27.3\%.

This paper makes four contributions:
\begin{enumerate}[leftmargin=1.25em,itemsep=0.2em,topsep=0.3em]
\item We identify \emph{permission laundering} as a runtime composition failure in tool-using agents: locally authorized calls can compose into an unauthorized end-to-end effect.
\item We formulate composition safety as monotonic budget propagation and state a non-amplification theorem: without declassification, composition cannot grant sink authority absent from the contributing sources' initial budgets.
\item We describe a transparent MCP proxy implementation of \chaincaps, including manifests, signed one-shot declassification tokens, fail-closed handling for unknown sinks, session isolation, and a manifest linter.
\item We evaluate \chaincaps on live frontier-model agents and replay traces, and identify manifest quality as the main deployment bottleneck rather than runtime overhead or model compatibility.
\end{enumerate}

Our scope is explicit-flow composition safety under trusted manifests and intact proxy enforcement. \chaincaps does not model covert channels, implicit flows through hidden model state, latent diagnostic memory inside multimodal models \cite{zhu2026medsynapsevbridgingvisualperception}, or policy errors in the manifests themselves. These limits are important, but they do not remove the need for runtime enforcement over dynamically assembled tool chains. \chaincaps provides one practical way to ensure that data can lose authority as it moves through a workflow, but cannot gain authority through composition.

%% file: sections/2_related_work.tex
\section{Related Work}
\label{sec:related}

\paragraph{Information-flow control for LLM systems.}
Fides attaches confidentiality and integrity labels to agent state and tool I/O \cite{fides2025}. The LLMbda Calculus formalizes conversations, state, and flow tracking for agent-like systems \cite{llmbda2026}. Both establish that value-level flow tracking is the right abstraction for language-model agents. Our setting differs in what the policy must express: in deployed tool chains, the relevant question is often not whether a value is ``secret'' but \emph{which downstream sinks it may still reach}. A value may be displayable in one UI, writable in one directory, or sendable only to one URL prefix. \chaincaps tracks sink-specific budgets directly. In our replay study, this sink-specific representation blocks substantially more traces than a scalar-label baseline instantiated on the same logs.

\paragraph{Execution isolation and environment controls.}
SEAgent combines mandatory access control with flow-graph reasoning to reduce privilege escalation \cite{seagent2026}. PFI isolates tool invocations so prompt flow cannot directly widen privileges across functions \cite{pfi2025}. AgentBound pushes permissions down to server or execution boundaries \cite{agentbound2025}. These mechanisms address failures that an application-layer proxy cannot see, including shell-level or OS-level side effects, and are valuable for defense in depth. Their focus is different from ours: they constrain execution boundaries, while \chaincaps constrains the authority of \emph{derived values} as they move through a workflow. The two approaches are complementary.

\paragraph{Detection, testing, and monitoring.}
ShieldAgent reasons over safety policies during execution \cite{shieldagent2025}. MCP-Guard provides defense in depth for the MCP ecosystem \cite{mcpguard2025}. ChainFuzzer and AgentGuard expose unsafe tool compositions through fuzzing and orchestrator-based safety evaluation \cite{chainfuzzer2026,agentguard2025}. Complementary benchmark-reliability work studies configuration-conditional rank instability on alignment benchmarks \cite{li2026safetyreproconfigurationconditionalrankinstability}. These systems help operators discover vulnerabilities and benchmark agents. We use ChainFuzzer to motivate the prevalence of multi-tool failures, and replay attacks adapted from InjecAgent and ToolEmu to test whether the enforcement rule transfers beyond our in-house tasks \cite{injecagent2024,toolemu2024}. But discovery is not enforcement: a monitor may detect a bad chain after it forms, while \chaincaps prevents the chain from executing its harmful sink in the first place when the sink remains visible to the proxy.

\paragraph{Broader agent safety.}
Prompt injection, indirect injection, adversarial prompt safeguards, retrieval poisoning, and associated operational controls are studied in \cite{mite2026,lin2026reflect}. These defenses are complementary, not competing: our residual failures in script indirection and shell-pipe laundering arise precisely because those cases fall outside an application-layer proxy's visibility. The invariant \chaincaps enforces---composition cannot create sink authority absent from all contributing inputs---is what existing safety layers typically leave implicit.

%% file: sections/3_method.tex
\section{Monotonic Capability Attenuation}
\label{sec:method}

\subsection{Threat Model and Deployment Model}
\label{sec:threat-model}

We consider a tool-using agent connected to multiple MCP servers or similar tool interfaces. The adversary may influence user prompts, retrieved documents, web content, or intermediate tool outputs. The adversary's goal is to cause a value derived from a restricted source to reach an unauthorized sink through a sequence of locally plausible tool calls. \chaincaps assumes that relevant tool calls pass through the proxy, manifests correctly identify sources and sinks, and the data movement to be controlled is visible at the protocol boundary. We do not claim protection against covert channels, hidden model-state leakage, compromised tool servers, policy errors in manifests, or OS-level side effects that bypass the proxy. In particular, a malicious tool server that hides an effect or declares an overly permissive $\mathrm{Pass}(t)$ can invalidate the deployment policy; our linter helps catch accidental manifest errors, but adversarial tool semantics require review, sandboxing, or lower-level isolation. These assumptions scope the guarantee to explicit-flow composition safety, which is the deployment gap targeted in this paper.

\paragraph{Intuition.}
\chaincaps enforces one simple rule: every value carries a budget, meaning the set of sinks it may still reach. When tools transform or combine values, the output budget is the intersection of the inputs' budgets and the tool's pass-through policy. The effect is monotonic capability attenuation. A value can keep authority or lose authority, but it cannot gain new authority just because the agent routed it through more tools.

\input{figures/budget-propagation}

\subsection{Budget Algebra}
\label{sec:algebra}

We model a sink privilege as a pair
\[
s = (\mathit{op}, \mathit{scope}),
\]
where $\mathit{op}$ names an effectful operation such as \texttt{http\_send}, \texttt{file\_write}, or \texttt{exec}, and $\mathit{scope}$ restricts where the effect may occur, such as a URL prefix, filesystem path, or command family. Privileges are ordered by scope inclusion:
\begin{equation}
\label{eq:order}
(\mathit{op}_1,\sigma_1) \preceq (\mathit{op}_2,\sigma_2)
\iff
\mathit{op}_1 = \mathit{op}_2
\wedge
\sigma_1 \subseteq \sigma_2.
\end{equation}
A narrower scope is therefore a weaker privilege.

A budget $B$ is a downward-closed set of sink privileges. Downward-closed means that if $s \in B$ and $s' \preceq s$, then $s' \in B$ as well. Operationally, $B$ is the set of sinks the value may still reach. Each source origin $o$ is assigned an initial budget $\mathrm{Init}(o)$, which specifies the maximum authority any value from that source may ever have.

Each tool manifest declares three pieces of metadata. First, $\mathrm{Exec}(t)$ is the family of sink privileges the tool may exercise. Second, $\mathrm{Pass}(t)$ is an upper bound on the budget of values produced by the tool. Third, the runtime derives a call-specific sink requirement $\mathrm{Req}(t,a)$ from the invoked arguments $a$, such as ``send to \texttt{https://api.example.com/v1/}'' or ``write under \texttt{/tmp/reports/}.'' In the MCP setting, these fields are tool-local and protocol-independent: they describe the semantics of the tool, not the wire format used to call it. This design keeps policy at the deployment boundary where operators already reason about source roles, sink scopes, and tool effects.

\subsection{Runtime Enforcement}
\label{sec:runtime}

\paragraph{Source initialization.}
When a source value $v$ enters from origin $o$, the proxy sets
\[
B(v) := \mathrm{Init}(o).
\]
Examples include a file read, a secret fetched from a credential store, or a value returned by a database query.

\paragraph{Transfer rule.}
When tool $t$ consumes values $x_1,\dots,x_k$ and produces output $y$, the runtime sets
\begin{equation}
\label{eq:transfer}
B(y) = \mathrm{Pass}(t) \cap B(x_1) \cap \cdots \cap B(x_k).
\end{equation}
This meet rule is the core monotonicity property. If a restrictive value is mixed with a permissive value, the result inherits the restrictive budget. That is exactly what we want for deployed agents: summarizing, formatting, or concatenating data should not recover authority that one of the ingredients already lacked.

\paragraph{Sink rule.}
Before an effectful invocation $(t,a)$, the proxy computes the set of values $D(a)$ that contributed to the sink arguments using the online dataflow DAG and conservative context tracking. The call is allowed only if the required sink remains authorized:
\begin{equation}
\label{eq:sink}
\mathrm{Req}(t,a) \in \bigcap_{x \in D(a)} B(x).
\end{equation}
Otherwise the proxy blocks the call unless a valid declassification token is present.

\paragraph{Context tracking.}
Generated tool arguments often depend on prompt context rather than on one explicit value. To stay sound at the protocol boundary, \chaincaps maintains a context budget $B_{\mathrm{ctx}}$ and intersects it with every value inserted into the model context. Values later recovered from that context inherit $B_{\mathrm{ctx}}$. This is conservative: the proxy does not know which tokens actually caused a future model output. The ablation shows why the conservatism is necessary in this setting; removing context tracking reduces attack blocking by 7 percentage points.

\subsection{Non-Amplification Guarantee}
\label{sec:guarantee}

\begin{theorem}[Non-amplification]
\label{thm:nonamp}
Assume a sink invocation $(t,a)$ is allowed by \textnormal{\chaincaps} without declassification. Let $O(a)$ be the set of source origins that transitively contribute to the sink arguments. Then
\[
\mathrm{Req}(t,a) \in \mathrm{Init}(o)
\qquad
\text{for every } o \in O(a).
\]
\end{theorem}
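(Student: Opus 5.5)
The plan is to prove an invariant by induction on the online dataflow DAG maintained by the proxy, and then apply the sink rule \eqref{eq:sink}. For each value $v$, let $O(v)$ be the set of source origins that transitively contribute to $v$. This includes contributions through the context budget $B_{\mathrm{ctx}}$, so a value recovered from model context counts every origin whose value was inserted into that context. The invariant is
\[
B(v) \subseteq \bigcap_{o \in O(v)} \mathrm{Init}(o) \qquad \text{for every tracked value } v,
\]
and it holds at every point of the execution in which no declassification token has been used.

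The induction follows the order in which the runtime assigns budgets.
\begin{itemize}
\item \emph{Base case.} A source value from origin $o$ has $O(v)=\{o\}$ and $B(v)=\mathrm{Init}(o)$, so the invariant holds with equality.
\item \emph{Transfer case.} For a tool output $y$ built from $x_1,\dots,x_k$, we have $O(y)=\bigcup_i O(x_i)$. By \eqref{eq:transfer}, $B(y) \subseteq B(x_i)$ for each $i$. The induction hypothesis then gives $B(y) \subseteq \bigcap_i \bigcap_{o\in O(x_i)} \mathrm{Init}(o)$, which is exactly the required bound. The factor $\mathrm{Pass}(t)$ only shrinks the set, so it never hurts.
\item \emph{Context case.} Here $B_{\mathrm{ctx}}$ is maintained as a running intersection. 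It is therefore contained in $B(w)$ for every value $w$ ever inserted into the context. A value recovered from context inherits $B_{\mathrm{ctx}}$, and by the same argument it satisfies the invariant for the union of the inserted values' origin sets. I would also maintain $B_{\mathrm{ctx}} \subseteq \bigcap_{o \in O_{\mathrm{ctx}}} \mathrm{Init}(o)$ as a second invariant. Here $O_{\mathrm{ctx}}$ is the set of origins inserted so far, and it only grows as $B_{\mathrm{ctx}}$ only shrinks.
\end{itemize}

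With the invariant in hand, the conclusion is immediate. The sink call $(t,a)$ is allowed without declassification, so \eqref{eq:sink} gives $\mathrm{Req}(t,a) \in B(x)$ for every $x\in D(a)$. Every origin $o\in O(a)$ contributes through some $x \in D(a)$, meaning $o \in O(x)$. The invariant then gives $B(x)\subseteq \mathrm{Init}(o)$, and hence $\mathrm{Req}(t,a)\in\mathrm{Init}(o)$. Downward closure is not needed for this direction, since the budgets are only intersected. Downward closure only guarantees that the intersections remain budgets.

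The main obstacle is not the algebra but pinning down the definition of ``transitively contribute''. The argument needs $O(a)$ to equal $\bigcup_{x\in D(a)} O(x)$, with $O(\cdot)$ computed over the same DAG plus context edges that the proxy tracks. If $O(a)$ is understood semantically, as all origins that actually influenced the arguments, then the theorem requires the soundness assumption from the threat model: every influence is visible to the proxy either as an explicit dataflow edge or via context tracking. I would state this explicitly as the interpretation of $O(a)$ under the explicit-flow assumptions. I would also note that the absence of declassification is what makes every budget update a shrinking one, because a declassification token is the only rule that could enlarge a budget and break the induction.
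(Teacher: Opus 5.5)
Your proposal is correct and follows the same route as the paper's proof sketch: source budgets equal $\mathrm{Init}(o)$, and both the transfer rule and context tracking only intersect, so every value feeding the sink has a budget contained in the initial budgets of its origins, and the sink rule then places $\mathrm{Req}(t,a)$ in each of them. Your version tightens the sketch by stating the invariant $B(v)\subseteq\bigcap_{o\in O(v)}\mathrm{Init}(o)$ explicitly and handling the context budget as a separate invariant. It also flags that the theorem depends on reading $O(a)$ as $\bigcup_{x\in D(a)} O(x)$ over proxy-visible dataflow and context edges, an assumption the paper leaves implicit.
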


\begin{proof}[Proof sketch]
Source values satisfy the claim by construction. For any derived value, \cref{eq:transfer} updates its budget only by intersection, so no step can add a sink privilege absent from one of its inputs. Context tracking narrows budgets further. Therefore every value that contributes to $a$ has a budget contained in the intersection of the initial budgets of its contributing sources. If the sink call is allowed, \cref{eq:sink} implies that $\mathrm{Req}(t,a)$ lies in that intersection, and hence in every contributing source budget.
\end{proof}

This theorem is the formal version of composition safety in our setting. Under the stated visibility and manifest assumptions, a chain of individually allowed tool calls cannot create downstream authority that the contributing sources never had.

\subsection{Why Label-Only IFC Is Not Enough}
\label{sec:ifc}

\begin{proposition}
\label{prop:labels}
Let there be $m$ independent sink classes. Any label-only system that exactly distinguishes all possible authorized sink subsets requires at least $2^m$ distinct labels, while a sink-budget representation requires only an $m$-bit membership vector.
\end{proposition}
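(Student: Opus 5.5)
The argument is a pigeonhole count on the lower bound, paired with an explicit encoding for the upper bound. I would first fix the setting so that ``exactly distinguishes'' has a precise meaning. Let $S = \{s_1,\dots,s_m\}$ be the $m$ independent sink classes. Independence means pairwise $\preceq$-incomparable, so every subset $A \subseteq S$ is realizable as the set of sinks some value may reach; concretely, the downward closure of $A$ is a legitimate budget in the sense of \cref{sec:algebra}. A label-only system assigns each value a label $\ell$ from a set $L$ and decides sink checks by a fixed predicate $\mathrm{allow}(\ell, s)$. Exact distinction means that for each authorized subset $A$ there is a label $\ell_A$ with $\mathrm{allow}(\ell_A, s) \iff s \in A$ for all $s \in S$.

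\textbf{Lower bound.} I would define a map $\phi$ from authorized subsets to labels, $\phi(A) = \ell_A$, and show it is injective. Suppose $A \neq A'$ and take $s$ in their symmetric difference. Then $\mathrm{allow}(\ell_A, s) \neq \mathrm{allow}(\ell_{A'}, s)$, so $\ell_A \neq \ell_{A'}$. Since there are $2^m$ subsets of $S$, this gives $|L| \ge 2^m$. An equivalent way to say this is that the map $\ell \mapsto \{s : \mathrm{allow}(\ell,s)\}$ must be surjective onto $2^S$.

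\textbf{Upper bound.} I would represent a budget restricted to $S$ by its characteristic vector $\chi_B \in \{0,1\}^m$, where $\chi_B[i] = 1$ iff $s_i \in B$. This is well defined because independence makes membership in each class a separate bit, and downward closure adds nothing among incomparable elements. I would then check that the operations of \cref{sec:runtime} are preserved. The transfer rule \cref{eq:transfer} becomes bitwise AND of vectors. The sink rule \cref{eq:sink} becomes a lookup of a single bit of the AND of the contributing vectors. So $m$ bits suffice, and they exactly distinguish all $2^m$ subsets.

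\textbf{Main obstacle.} The hard part is interpretive rather than computational: pinning down ``label-only'' so that the comparison is not vacuous. A scalar label set with $2^m$ elements and bitwise semantics is itself just a sink budget, and any set of $2^m$ labels can be encoded in $m$ bits. The honest content of the claim is therefore about what the policy must expose. A scalar or totally ordered lattice of labels (as in confidentiality levels) has chains, and a chain of length $k$ realizes only $k$ distinct sink sets, all of which are nested. I would add a remark that a totally ordered label scheme with $k$ levels realizes at most $k$ nested authorized sets. Exact distinction of all subsets, which include incomparable ones such as $\{s_1\}$ versus $\{s_2\}$, is therefore impossible for a chain unless labels carry the full powerset structure, in which case they coincide with the budget representation. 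That remark is what connects the proposition to the replay comparison against scalar IFC.
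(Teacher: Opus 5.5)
Your proposal is correct and is essentially the paper's argument made rigorous. The lower bound is the same counting observation that distinct authorized subsets must receive distinct labels (your symmetric-difference witness supplies the injectivity the paper's sketch leaves implicit), and the upper bound is the same one-bit-per-class membership vector, which you additionally check is preserved by the transfer rule (bitwise AND) and by the sink rule (a single-bit lookup when $\mathrm{Req}(t,a)$ is one of the $m$ classes); your closing remark that a totally ordered label scheme realizes only nested allowed sets sharpens the paper's own post-proof caveat that the point concerns scalar labels over a single sensitivity order rather than IFC in general.
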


\begin{proof}[Proof sketch]
Distinct sink subsets must remain distinguishable if enforcement is to treat them differently. There are $2^m$ such subsets, so any exact scalar-label encoding needs at least that many distinct states. A budget stores one bit per sink class, which yields the same policy space directly.
\end{proof}

This is a representation argument, not a dismissal of IFC. Rich IFC systems can encode many policies. The narrower point is that one scalar label per value is an awkward abstraction once deployment policy depends on subsets of downstream sinks rather than on one total order of sensitivity.

\subsection{Implementation}
\label{sec:implementation}

Our reference implementation is an MCP proxy of roughly 1{,}200 lines of Python. The proxy sits between the agent and ordinary MCP servers, intercepts tool calls and results, and maintains three runtime data structures: a budget map from value identifiers to budgets, an online dataflow DAG for lineage recovery, and the context budget. Because enforcement happens at the protocol boundary, neither the frontier model nor the tool servers need modification.

Each manifest declares source budgets, executable sink privileges, and pass-through budgets. Declassification uses HMAC-SHA256 signed one-shot tokens bound to the specific sink request and lineage, so replayed or forged tokens fail verification. For partially specified ecosystems, the proxy fails closed on unknown sinks using a keyword heuristic over tool names and descriptions. In our corpus study this heuristic achieved 100\% sink recall, but with many false positives, so it is a bootstrapping aid rather than a substitute for manifests. We also support session isolation, which resets the context budget at task boundaries and reduces false positives on hard benign scenarios from 45\% to 20\%. Finally, a six-rule manifest linter flags common authoring errors such as wildcard scopes, missing source coverage, and mismatches between declared tool role and observable effect. The evaluation shows that this authoring layer is the main practical bottleneck.

\input{figures/architecture}

%% file: figures/budget-propagation.tex
\begin{figure}[t]
\centering
\resizebox{0.98\columnwidth}{!}{%
\begin{tikzpicture}[
  font=\scriptsize,
  >=Latex,
  srcbox/.style={
    draw=orange!70!black, rounded corners=3pt,
    fill=orange!12, text width=2.25cm,
    minimum height=1.10cm, align=center, inner sep=4pt,
  },
  pubbox/.style={
    draw=green!60!black, rounded corners=3pt,
    fill=green!10, text width=2.25cm,
    minimum height=1.10cm, align=center, inner sep=4pt,
  },
  toolbox/.style={
    draw=blue!55!black, rounded corners=3pt,
    fill=blue!8, text width=2.55cm,
    minimum height=1.20cm, align=center, inner sep=4pt,
  },
  outbox/.style={
    draw=orange!70!black, rounded corners=3pt,
    fill=orange!12, text width=2.1cm,
    minimum height=1.00cm, align=center, inner sep=4pt,
  },
  blockbox/.style={
    draw=red!65!black, rounded corners=3pt,
    fill=red!10, text width=2.05cm,
    minimum height=1.00cm, align=center, inner sep=4pt,
  },
  allowbox/.style={
    draw=green!55!black, rounded corners=3pt,
    fill=green!10, text width=2.05cm,
    minimum height=1.00cm, align=center, inner sep=4pt,
  },
  fl/.style={->, line width=0.85pt, draw=black!55},
  denyfl/.style={->, line width=0.85pt, draw=red!60!black},
  allowfl/.style={->, line width=0.85pt, draw=green!50!black},
]

\node[srcbox] (src1) at (0, 1.15) {%
  \textbf{read\_file}\\[-2pt]
  \texttt{salaries.csv}\\[-1pt]
  {$B_1=\{\texttt{display}\}$}};

\node[pubbox] (src2) at (0,-1.15) {%
  \textbf{fetch\_url}\\[-2pt]
  \texttt{news.com}\\[-1pt]
  {$B_2=\{\texttt{display},$}\\[-1pt]
  {$\texttt{http},\texttt{email}\}$}};

\node[toolbox] (sum) at (3.65, 0) {%
  \textbf{summarize}\\[-2pt]
  {$B_{\mathrm{out}}=B_1\cap B_2$}\\[-1pt]
  {$=\{\texttt{display}\}$}};

\node[blockbox] (blk) at (7.35, 1.15) {%
  \textbf{send\_http}\\[-2pt]
  \textbf{BLOCKED}\\[-1pt]
  {\texttt{http} absent}};
\node[allowbox] (alw) at (7.35,-1.15) {%
  \textbf{display\_user}\\[-2pt]
  \textbf{ALLOWED}\\[-1pt]
  {\texttt{display} present}};

\draw[fl] (src1.east) -- ++(0.45,0) |- (sum.west);
\draw[fl] (src2.east) -- ++(0.45,0) |- (sum.west);
\draw[denyfl] (sum.east) -- ++(0.55,0) |- (blk.west);
\draw[allowfl] (sum.east) -- ++(0.55,0) |- (alw.west);

\end{tikzpicture}%
}
\caption{Budget propagation example. A summary combining salary data (display-only) and public news inherits the most restrictive budget via intersection. Because the resulting budget permits display but not HTTP sending, the outbound call is blocked while user display is allowed. This monotonic attenuation is the core runtime property of \chaincaps.}
\label{fig:budget-example}
\end{figure}

%% file: figures/architecture.tex
\begin{figure}[t]
\centering
\resizebox{\columnwidth}{!}{%
\begin{tikzpicture}[
  font=\footnotesize,
  >=Latex,
  actor/.style={
    draw=black!55,
    rounded corners=2pt,
    fill=black!6,
    text width=1.45cm,
    minimum height=0.88cm,
    align=center,
    inner sep=3pt,
  },
  stepbox/.style={
    draw=blue!60!black,
    rounded corners=4pt,
    fill=blue!8,
    text width=2.5cm,
    minimum height=0.98cm,
    align=center,
    inner sep=3pt,
  },
  focusbox/.style={
    stepbox,
    fill=cyan!12,
    text width=2.85cm,
  },
  denybox/.style={
    draw=orange!75!black,
    rounded corners=4pt,
    fill=orange!10,
    text width=1.75cm,
    minimum height=0.88cm,
    align=center,
    inner sep=3pt,
  },
  proxybox/.style={
    draw=blue!45!black,
    rounded corners=6pt,
    fill=blue!3,
    inner sep=9pt,
  },
  branchflow/.style={line width=0.9pt, draw=black!45},
  reqflow/.style={->, line width=0.9pt, draw=blue!65!black},
  allowflow/.style={->, line width=0.9pt, draw=green!50!black},
  denyflow/.style={->, line width=0.9pt, draw=orange!75!black},
  respbranch/.style={line width=0.78pt, draw=black!50, densely dashed},
  respflow/.style={->, line width=0.78pt, draw=black!50, densely dashed},
  lab/.style={font=\scriptsize, text=black!65},
  allowlab/.style={font=\scriptsize\bfseries, text=green!45!black},
  denylab/.style={font=\scriptsize\bfseries, text=orange!72!black},
]

\node[actor] (agent) at (-5.6, 0) {\textbf{LLM}\\agent};

\node[stepbox, text width=1.8cm] (s1) at (-3.0, 0.95)
  {\textbf{1. Resolve}\\deps $D(a)$};
\node[stepbox, text width=2.3cm] (s2) at (0.0, 0.95)
  {\textbf{2. Aggregate}\\[1pt]%
   {\footnotesize $B_\mathrm{agg}=B_\mathrm{ctx}$}\\[-2pt]%
   {\footnotesize $\cap\bigcap_{x\in D}B(x)$}};
\node[focusbox, text width=3cm] (s3) at (3.6, 0.95)
  {\textbf{3. Authorize sink}\\[1pt]%
   {\footnotesize $\mathrm{Req}(t,a)\in B_\mathrm{agg}$}\\[-2pt]%
   {\footnotesize else verify token}};

\node[stepbox, text width=3.0cm] (s5) at (-2.7, -1.0)
  {\textbf{5. Propagate}\\[1pt]%
   {\footnotesize $B(y)=\mathrm{Pass}(t)\cap B_\mathrm{agg}$}\\[-2pt]%
   {\footnotesize update $B(\cdot)$, $B_\mathrm{ctx}$}};
\node[stepbox, text width=2.45cm] (s4) at (1.35, -1.0)
  {\textbf{4. Forward}\\{\footnotesize \texttt{tools/call} to external server}};
\node[denybox, text width=1.8cm] (block) at (4.35, -1.0)
  {\textbf{4'. Block}\\};

\path (s3.south) ++(0,0) coordinate (branch);

\begin{scope}[on background layer]
\node[proxybox, fit=(s1)(s2)(s3)(block)(s4)(s5),
  label={[yshift=2pt, font=\small\bfseries, text=blue!55!black]north:ChainCaps proxy}]
  (proxy) {};
\end{scope}

\path (agent.north) ++(0,0.53) coordinate (callrise);
\node[lab, anchor=west] at (-6.2, 1.2) {\texttt{tools/call}};
\draw[reqflow] (agent.north) -- (callrise) -- (s1.west);
\draw[reqflow] (s1.east) -- (s2.west);
\draw[reqflow] (s2.east) -- (s3.west);
\draw[branchflow] (s3.south) -- (branch);
\draw[allowflow] (branch) -- node[allowlab, right, pos=0.7] {allow} (s4.north);
\draw[denyflow] (branch) -- node[denylab, right, pos=0.8] {deny} (block.north);

\draw[respflow] (s4.west) -- node[lab, above] {response} (s5.east);
\path (s5.west) ++(-0.82,0) coordinate (resultbend);
\path (block.south) ++(0,-0.4) coordinate (errorbend);
\path (agent.south |- errorbend) coordinate (retbase);
\path (agent.south |- resultbend) coordinate (retmid);

\draw[respbranch] (s5.west)
  -- node[lab, above, pos=0.9, yshift=1.5pt, inner sep=1.5pt] {result} (resultbend)
  -- (retmid);
\draw[respbranch] (block.south) -- (errorbend);
\draw[respbranch] (errorbend)
  -- node[lab, above, pos=0.95, yshift=1.5pt, inner sep=1.5pt] {error} (retbase);
\draw[respbranch] (retbase) -- (retmid);
\draw[respflow] (retmid) -- (agent.south);

\end{tikzpicture}%
}
\caption{ChainCaps proxy architecture. The proxy intercepts every
\texttt{tools/call} between the LLM agent and tool server.
Steps~1--2 resolve argument dependencies and compute
$B_\mathrm{agg}=B_\mathrm{ctx}\cap\bigcap_{x\in D}B(x)$.
Step~3 checks whether $\mathrm{Req}(t,a)\in B_\mathrm{agg}$; if not,
it verifies a lineage-bound declassification token before either
forwarding (step~4) or blocking (step~4', dashed).
On the response path, step~5 propagates
$B(y)=\mathrm{Pass}(t)\cap B_\mathrm{agg}$
and updates $B_\mathrm{ctx}$ before returning the result (dashed).
Implementation details are in \S\ref{sec:implementation}.}
\label{fig:architecture}
\end{figure}

%% file: sections/4_evaluation.tex
\section{Evaluation}
\label{sec:evaluation}

\subsection{Setup}
\label{sec:setup}

We evaluate \chaincaps in both live and replay settings. The main experiment uses 82 stress-test tasks: 56 adversarial tasks spanning 12 attack categories and 26 benign tasks spanning 5 workflow types. The adversarial tasks exercise cross-source mixing, scope violations, direct and composed exfiltration, file-based theft, code injection, execution laundering, indirect injection, and shell-mediated leakage. The benign tasks cover ordinary multi-tool workflows where restricted and public values may coexist without policy violation: document and report synthesis, allowed file or database retrieval, scoped web/API lookup, local write/update operations, and approved outbound communication. We run five frontier models from three providers: Claude Sonnet 4, Claude Sonnet 4.6, Claude Opus 4.6, GPT-5.1, and Qwen 3.5. Each model is tested with and without \chaincaps for three fresh-session runs at temperature 0, yielding 82 $\times$ 5 $\times$ 3 = 1{,}230 defended records and 2{,}460 total records across both conditions. We use repeated fresh-session runs because LLM outcomes on deterministic programming-style tasks can vary across invocations even when the task specification is fixed \cite{zhou2026accuracystabilityrepeatedrunreliability}. The live setup uses real tool execution through the MCP proxy. We also replay recorded traces to compare \chaincaps against modeled versions of Fides and PFI under identical source annotations and tool metadata.

We report attack success rate (ASR) on adversarial tasks and benign completion on benign tasks. An adversarial run succeeds if restricted data reaches, or is used to authorize, a sink outside the source's budget. A benign run succeeds if the requested workflow completes without a policy violation or unnecessary block. In replay, we report the fraction of attack traces blocked at the attack step. The live and replay numbers answer different questions: live evaluation measures end-to-end protection in a deployed loop that includes model refusals and alternative plans, while replay isolates the enforcement mechanism on traces that actually attempt the attack.

\subsection{Main Results}
\label{sec:main-results}

Table~\ref{tab:main-results} shows the main deployment result. Without defense, baseline ASR ranges from 25.2\% to 67.8\% depending on the model. With \chaincaps, all five models fall to 0--4.8\% ASR, while benign completion remains between 96\% and 100\%. The result holds across providers and model families, supporting the practical claim that a protocol-level proxy can protect heterogeneous frontier-model agents without model-specific tuning.

\begin{table}[t]
\centering
\caption{Live agent evaluation on 82 tasks with three runs per condition. \chaincaps sharply reduces attack success while preserving benign completion.}
\label{tab:main-results}
\resizebox{\columnwidth}{!}{%
\begin{tabular}{@{}lccccc@{}}
\toprule
\textbf{Metric} & \textbf{Sonnet 4} & \textbf{Sonnet 4.6} & \textbf{Opus 4.6} & \textbf{GPT-5.1} & \textbf{Qwen 3.5} \\
\midrule
ASR without \chaincaps & 67.8\% & 25.2\% & 31.5\% & 53.6\% & 53.0\% \\
ASR with \chaincaps & 4.8\% & 3.6\% & 3.6\% & 2.4\% & 0.0\% \\
Relative reduction & 93\% & 86\% & 89\% & 96\% & 100\% \\
Benign completion w/o \chaincaps & 100\% & 100\% & 100\% & 100\% & 100\% \\
Benign completion w/ \chaincaps & 100\% & 100\% & 96\% & 100\% & 96\% \\
\bottomrule
\end{tabular}%
}
\end{table}

The remaining failures are concentrated rather than broad. Seven of the 12 attack categories are fully neutralized under gold manifests, including cross-source mixing, HTTP scope violations, direct and composed data theft, file-based theft, and code injection. Residual failures cluster in execution laundering (5\% average ASR on the non-Qwen models), indirect injection (8\%), and shell exfiltration (8\%). These are exactly the cases where authority can be re-expressed at a boundary the proxy does not fully see, such as writing a script for later execution or laundering data through shell pipes.

\input{figures/asr-comparison}

\subsection{Comparison with Prior Defenses}
\label{sec:baselines}

Table~\ref{tab:baselines} compares \chaincaps with replayed versions of Fides and PFI. These replay baselines are not full reimplementations of Fides or PFI. Instead, they instantiate the enforcement abstraction most relevant to our setting---scalar value labels for Fides-style IFC and per-function isolation for PFI-style enforcement---under the same recorded traces, source annotations, and tool metadata. The comparison is therefore intended to isolate policy expressiveness under shared traces rather than to evaluate complete systems. Under that interpretation, \chaincaps is consistently stronger across all five models. The margin is large enough to matter in deployment: on GPT-5.1 traces, \chaincaps blocks 59.5\% of attack traces versus 16.1\% for Fides and 10.1\% for PFI. This matches the design argument in \cref{sec:ifc}: scalar IFC labels and per-function isolation each cover part of the problem, but neither directly enforces sink-specific monotonic authority across a tool chain.

\begin{table}[t]
\centering
\caption{Replay-based attack blocking on recorded traces. The Fides and PFI rows model the policy abstraction most relevant to our setting rather than full system reimplementations.}
\label{tab:baselines}
\small
\begin{tabular}{@{}lccccc@{}}
\toprule
\textbf{Defense} & \textbf{S4} & \textbf{S4.6} & \textbf{O4.6} & \textbf{G5.1} & \textbf{Q3.5} \\
\midrule
\chaincaps & \textbf{58.9\%} & \textbf{24.4\%} & \textbf{31.0\%} & \textbf{59.5\%} & \textbf{63.1\%} \\
Fides & 16.7\% & 7.7\% & 10.7\% & 16.1\% & 24.4\% \\
PFI & 8.3\% & 3.0\% & 4.8\% & 10.1\% & 13.7\% \\
\bottomrule
\end{tabular}
\end{table}

We also tested generalization beyond our in-house tasks. On 46 attacks adapted from InjecAgent and ToolEmu style benchmarks \cite{injecagent2024,toolemu2024}, replay through \chaincaps blocks all 46 attacks. End-to-end live protection varies by model because many of these prompts are explicit enough that models self-refuse at different rates, but the enforcement layer blocks every recorded attack step when it is reached.

\subsection{Ablation, Manifest Quality, and Cost}
\label{sec:ablation}

Table~\ref{tab:ablation-manifest} summarizes the two main deployment lessons beyond headline efficacy. First, every major component contributes measurable protection. Removing the meet rule causes the largest drop, reducing attack blocking by 17 percentage points. Removing scope granularity, context tracking, or pass-through attenuation each costs another 7--10 points. This supports the full budget algebra rather than any one heuristic part of it. Second, manifest quality dominates deployment effectiveness. In the manifest-quality sweep, expert manifests achieve 100\% attack blocking with 100\% benign completion, checklist-guided manifests still achieve 90.9\% blocking, and naive manifests fall to 27.3\% blocking and 50\% benign completion. The main practical bottleneck is therefore not overhead or model compatibility; it is policy authoring quality.

\begin{table}[t]
\centering
\caption{Ablation and manifest-quality results. Ablation deltas ($\Delta$) are percentage-point changes in attack blocking relative to the full system on 163 held-out replay traces; all variants reduce blocking. Manifest quality results show end-to-end attack blocking and benign completion in the live manifest sweep.}
\label{tab:ablation-manifest}
\scriptsize
\begin{tabular}{@{}lcc@{}}
\toprule
\multicolumn{3}{@{}l}{\textbf{Component ablation (163-trace replay suite)}} \\
\midrule
\textbf{Variant} & \textbf{Attack block} & \textbf{$\Delta$} \\
\midrule
Full \chaincaps & reference & --- \\
$-$ Meet rule & lower & $-17$ pts \\
$-$ Scope granularity & lower & $-10$ pts \\
$-$ Context budget & lower & $-7$ pts \\
$-$ Pass-through attenuation & lower & $-7$ pts \\
\midrule
\multicolumn{3}{@{}l}{\textbf{Manifest quality (live gold-manifest evaluation)}} \\
\midrule
\textbf{Manifest set} & \textbf{Attack block} & \textbf{Benign completion} \\
\midrule
Gold (expert-authored) & 100.0\% & 100.0\% \\
Careful (checklist-guided) & 90.9\% & 100.0\% \\
Naive (auto-generated) & 27.3\% & 50.0\% \\
\bottomrule
\end{tabular}
\end{table}

Two further results support deployability. The proxy adds 0.13\,ms median latency per tool call (P95: 0.34\,ms), negligible relative to model inference and network time. In an adaptive red-team exercise, \chaincaps blocks 14 of 14 attacks, including attempts at budget washing, staged sink use, and declassification abuse. Together, these results indicate that the mechanism is hard to evade at the application layer once manifests are correct.

\subsection{Threats to Validity}
\label{sec:threats}

The evaluation should be read as stress-test evidence, not as a population estimate of all deployed agents. The 82 tasks intentionally concentrate on composition failures, so the baseline ASR is not a claim about ordinary enterprise traffic. Our replay baselines instantiate the policy abstractions most relevant to this setting rather than full competing systems; they are useful for comparing expressiveness under shared traces, but not for ranking complete implementations. Finally, the strongest results depend on gold manifests and proxy-visible effects. The manifest-quality sweep and residual shell failures show exactly where those assumptions become load-bearing in practice.

\subsection{Takeaways}
\label{sec:takeaways}

Three conclusions follow from the evaluation. First, composition safety is a real deployment problem: baseline ASR of 25--68\% across all five frontier models confirms that per-tool access control leaves a substantial gap on our stress-test suite, independent of which provider or model family is used. Second, monotonic capability attenuation closes that gap more reliably than scalar IFC or per-function isolation in both live and replay settings, with a margin large enough to matter in production (e.g., 58.9\% vs 16.7\% trace blocking on Sonnet 4). Third, the dominant deployment variable is manifest quality, not runtime overhead or model compatibility: the algebra is correct, but enforcement is only as strong as the policy documents describing each tool's source and sink roles. The outstanding engineering challenge for adopting \chaincaps at MCP ecosystem scale is therefore tooling for authoring, linting, and testing manifests, not the runtime mechanism itself.

%% file: figures/asr-comparison.tex
\begin{figure}[t]
\centering
\small
\resizebox{\columnwidth}{!}{%
\begin{tikzpicture}[
  x=1.7cm, y=0.065cm,
  ndbar/.style={fill=red!30, draw=red!60!black, line width=0.4pt},
  ccbar/.style={fill=blue!45, draw=blue!70!black, line width=0.4pt},
]
\draw[gray!40, line width=0.3pt] (0.3,0) -- (6.2,0);
\foreach \y in {0,10,20,30,40,50,60,70} {
  \draw[gray!20, line width=0.3pt] (0.3,\y) -- (6.2,\y);
  \node[anchor=east, font=\scriptsize, text=black!70] at (0.25,\y) {\y};
}
\node[anchor=east, font=\scriptsize, rotate=90] at (-0.15,55) {Attack Success Rate (\%)};

\def\bw{0.22}

\fill[ndbar] (1-\bw,0) rectangle (1,67.8);
\fill[ccbar] (1,0) rectangle (1+\bw,4.8);
\node[above, font=\tiny, text=red!60!black] at (1-\bw/2,67.8) {67.8};
\node[above, font=\tiny, text=blue!70!black] at (1+\bw/2,4.8) {4.8};
\node[below, font=\scriptsize, text=black] at (1,0) {Sonnet 4};

\fill[ndbar] (2-\bw,0) rectangle (2,25.2);
\fill[ccbar] (2,0) rectangle (2+\bw,3.6);
\node[above, font=\tiny, text=red!60!black] at (2-\bw/2,25.2) {25.2};
\node[above, font=\tiny, text=blue!70!black] at (2+\bw/2,3.6) {3.6};
\node[below, font=\scriptsize, text=black] at (2,0) {Sonnet 4.6};

\fill[ndbar] (3-\bw,0) rectangle (3,31.5);
\fill[ccbar] (3,0) rectangle (3+\bw,3.6);
\node[above, font=\tiny, text=red!60!black] at (3-\bw/2,31.5) {31.5};
\node[above, font=\tiny, text=blue!70!black] at (3+\bw/2,3.6) {3.6};
\node[below, font=\scriptsize, text=black] at (3,0) {Opus 4.6};

\fill[ndbar] (4-\bw,0) rectangle (4,53.6);
\fill[ccbar] (4,0) rectangle (4+\bw,2.4);
\node[above, font=\tiny, text=red!60!black] at (4-\bw/2,53.6) {53.6};
\node[above, font=\tiny, text=blue!70!black] at (4+\bw/2,2.4) {2.4};
\node[below, font=\scriptsize, text=black] at (4,0) {GPT-5.1};

\fill[ndbar] (5-\bw,0) rectangle (5,53.0);
\fill[ccbar] (5,0) rectangle (5+\bw,0.5); 
\node[above, font=\tiny, text=red!60!black] at (5-\bw/2,53.0) {53.0};
\node[above, font=\tiny, text=blue!70!black] at (5+\bw/2,1.5) {\textbf{0}};
\node[below, font=\scriptsize, text=black] at (5,0) {Qwen 3.5};

\fill[ndbar] (4.0,66) rectangle (4.35,69);
\node[anchor=west, font=\scriptsize] at (4.4,67.5) {No Defense};
\fill[ccbar] (4.0,59) rectangle (4.35,62);
\node[anchor=west, font=\scriptsize] at (4.4,60.5) {\textsc{ChainCaps}};

\end{tikzpicture}%
}
\caption{Attack success rate across five frontier models. Without
defense, ASR ranges from 25\% to 68\%. With \chaincaps, all tested
models fall to $\leq$5\% ASR (Qwen~3.5 reaches 0\%), corresponding to
an 86--100\% relative reduction on this stress-test suite.}
\label{fig:asr-comparison}
\end{figure}

%% file: sections/5_conclusion.tex
\section{Conclusion}
\label{sec:conclusion}

Deployed tool-using agents need more than per-tool authorization. When models assemble workflows at runtime, locally safe calls can compose into unsafe end-to-end behavior. \chaincaps addresses that gap with a narrow but enforceable invariant: authority attached to data attenuates monotonically through the composition chain. Across five frontier models and an 82-task stress-test suite, this invariant reduces attack success from 25--68\% to 0--4.8\% while preserving 96--100\% benign completion and adding sub-millisecond overhead.

Explicit-flow composition safety is enforceable today in MCP-style ecosystems when manifests are accurate and the relevant calls pass through the proxy. The binding constraint on deployment effectiveness is manifest quality, not the runtime algebra. Three concrete next steps follow from our results: (1) automated manifest generation that infers source and sink roles from tool documentation and observed behavior; (2) a manifest linter and test harness that operators can run in CI before deploying new tools; (3) extension of budget semantics to agent-tool protocols beyond MCP, including direct API calls and browser-based agents. Residual failures in script indirection and shell-pipe laundering require a different class of solution --- either treating shell execution as a composite sink with OS-level inspection, or coordinating with network-layer monitoring to observe effects the proxy cannot directly see. Monotonic capability attenuation closes the composition safety gap at the application layer; closing the remaining boundary failures will require enforcement that reaches below it.

More broadly, our results suggest that agent safety in the wild cannot be reduced to model refusal or per-tool authorization alone. As agents increasingly assemble workflows dynamically across heterogeneous tools, deployment systems need runtime invariants over the composed workflow itself. \chaincaps provides one such invariant for explicit dataflow: values may lose authority as they move through the chain, but composition should not allow them to regain authority that their sources never had.

\paragraph{Code and artifact availability.}
The public artifact is available at \url{https://github.com/Jxcup/chaincaps-code} under an MIT license.